\newtheorem{theorem}{Theorem}
\newtheorem*{note}{Note}
\newtheorem*{lemma}{Lemma}
\begin{document}

\preprint{APS/123-QED}

\title{Foundation for the $\Delta SCF$ Approach in Density Functional Theory}% Force line breaks with \\
%\thanks{A footnote to the article title}%

\author{Weitao Yang}
\affiliation{Department of Chemistry and Department of Physics, Duke University, Durham, North Carolina 27708}
 \email{weitao.yang@duke.edu}
\author{Paul W. Ayers}
\affiliation{Department of Chemistry and Chemical Biology, McMaster University, Hamilton, Ontario L8S 4M1}
 \email{ayers@mcmaster.ca}

\date{\today}% It is always \today, today,
             %  but any date may be explicitly specified

%Limited to 600 characters according to PRL, but the published papers have abstracts that seem much much longer.
\begin{abstract}
We extend ground-state density-functional theory to excited states
and provide the theoretical formulation for the widely
used $\Delta SCF$ method for calculating excited-state energies and densities. As the electron density alone is insufficient to characterize
excited states, we formulate excited-state theory using the defining variables
of a noninteracting reference system, namely (1) the excitation quantum number
$n_{s}$ and the potential $w_{s}(\mathbf{r})$ (excited-state potential-functional theory, $n$PFT), (2) the noninteracting wavefunction $\Phi$
($\Phi$-functional theory, $\Phi$FT), or (3) the noninteracting one-electron reduced
density matrix $\gamma_{s}(\mathbf{r},\mathbf{r}')$ (density-matrix-functional theory, $\gamma_{s}$FT). We show the equivalence of
these three sets of variables and their corresponding energy functionals. Importantly, the ground and excited-state exchange-correlation energy use the \textit{same} universal functional, regardless of whether $\left(n_{s},w_{s}(\boldsymbol{r})\right)$, $\Phi$, or $\gamma_{s}(\mathbf{r},\mathbf{r}')$ is selected as the fundamental descriptor of the system. We derive the excited-state (generalized) Kohn-Sham equations. The minimum of all three functionals is the ground-state energy and, for ground states, they are all equivalent to the Hohenberg-Kohn-Sham method. The other stationary points of the functionals provide  the excited-state energies and electron densities, establishing the foundation for the $\Delta SCF$ method.
\end{abstract}

%\keywords{Suggested keywords}%Use showkeys class option if keyword
                              %display desired
\maketitle

%\tableofcontents
The mathematical framework of density-functional theory (DFT), in its original formulations, is based on the minimum-energy variational principle and is thus restricted to ground 
states.\cite{HohenbergKohn1964,kohnSelfconsistentEquationsIncluding1965,Levy1979constrainedsearch,valoneOnetoOneMappingOneParticle1980,Lieb1983dftIJQC,Lieb1985,ayersAxiomaticFormulationsHohenbergKohn2006,TealeHelgaker2022DFTexchange} However, the importance of photochemical and nonadiabatic dynamics provides impetus for a treatment of excited states with DFT. 
Early successes allowed ground-state DFT to be extended to the lowest-energy excited state of a given symmetry.\cite{vonBarthHedin1972SDFT,GunnarssonLundqvist1976, PerdewZunger1981spinDFT, Gorling1993SymmetryDFT, Theophilou1997ensembleDFT, Theophilou1998essymmetries}, but this requires symmetry-specific functionals and only provides access to a tiny fraction of excited states. 

The theoretical formulation of ground-state DFT cannot be directly extended to excited states because there is no excited-state extension of the Hohenberg-Kohn theorem.\cite{GaudoinBurke2004esHKtheorem, GaudoinBurke2005esHKtheoremerrata, SamalHarbola2006esHK, SamalHarbolaHolas2006esHK} Specifically, it is possible for the $m^{\text{th}}$ excited state density of one system to be the $n^{\text{th}}$ excited state density of a different system. Therefore one needs more than just the excited-state electron density to fully specify the state of the system. This additional information can come from:
(1) A real number, or pair of integers, specifying the excitation level, \cite{Gorling1993SymmetryDFT, Gorling1996esDFT, Gorling1999esDFTgenAC, Gorling2000esDFTPRL,RNAyersLevy2009excited} (2) The ground-state density and the excitation level,\cite{Gorling1996esDFT, LevyNagy1999esDFTKoopmans, LevyNagy1999esDFTPRL, NagyLevy2001esDFTconstrainedsearch, RNAyersLevy2009excited, NagyLevyAyers2009esDFTchapter,HarbolaSamal2009esHK, HarbolaSamal2009esHKchapter}
 (3) ensemble weights,\cite{Theophilou1979ensembleDFT,GrossOliveiraKohn1988ensembleDFT,OliveiraGrossKohn1988ensembleDFT,OliveiraGrossKohn1990ensembleDFT,TheophilouGidopoulos1995ensembleDFT,Gidopoulos2002ensembleDFT,GidopoulosGross2002ensembleDFT,FranckFromager2014ensembleDFT,Filatov2015ensembleDFT,CernaticSenjeanFromager2021ensembleDFT,DeurFromager2019ensembleDFT,fromagerIndividualCorrelationsEnsemble2020,GouldKronik2021ensembleDFTgenKS,GouldKooiGoriGiorgiPittalis2023ensembleDFT}
 or (4) a matrix of densities.\cite{KleinDreizler2002esDFT, GaoGrofe2016esMatrixDFT, LuGao2022matrixDFTesJCTC, LuGao2022matrixDFTes}

There are also approaches based on ground-state \textit{response} properties. 
For example, the  dynamic linear response of the ground state diverges when the frequency is matched to an excitation energy. This motivates time-dependent DFT (TDDFT),\cite{RungeGross1984tddft,GrossKohn1985tddft, CasidaChong1995TDDFTreview, Casida2009reviewTDDFT} as well as closely related approaches based on the random phase approximation (including particle-hole and particle-particle/hole-hole flavors), equations of motion, and propagator theory.\cite{Yangvanaggelen2013ppRPAes, YangPenLuYang2014ppRPAes,ZhangPengYang2015ppRPAes,YangPengDavidson2015ppRPAes,YangDavidsonYang2016ppRPAes, MeiYang2019qedft} 
Establishing the exactness of such approaches is significantly harder than the analogous Levy-Lieb formulations of ground-state DFT.\cite{vanLeeuwen1999tddftrigorous,vanLeeuwen2001tddftrigorous, RuggenthalerVanleeuwen2015tddftrigorous}

An attractive \textit{practical} approach is the $\Delta$SCF method, which was first proposed in the context of what we now call DFT by Slater.\cite{Slater1972AdvQuantumChem,SlaterWood1970SlaterTM} In $\Delta$SCF, one constructs the excited-state density by choosing a non-aufbau population of the Kohn-Sham (KS) orbitals and converges the associated self-consistent field (SCF) calculation; the energy is evaluated using the \textit{ground-state} density functional even though one is targeting an excited state.\cite{ZieglerRaukBaerends1977deltaSCF,JonesGunnarsson1989dftreview}  Results are often excellent, comparable to the accuracy of TDDFT for well-behaved excited states but superior for excitations with charge-transfer or multiple-excitation character.\cite{TrigueroPetterssonAgren1999deltaSCF,ChengWuVanVoorhis2008DeltaSCF, GilbertBesleyGill2008MOM, SeiduZiegler2015constrictedDFT,LiuBaoYi2017deltaSCF,BarcaGilbertGill2018deltaSCFes, BarcaGilbertGill2018iMOM,  HaitHeadgordon2020deltaSCF, KumarLuber2022deltaSCF, GrimmeMewes2021deltaSCF} This begs the question: why can one apply a density functional that was developed for ground states to excited states? 

To formulate the $\Delta$SCF approach as a density-functional method, one needs to define an energy functional that is stationary for excited excited states.\cite{Gorling1999esDFTgenAC} For the restricted set of excited-state electron densities that are not the ground-state density for any system, the Levy constrained search functional is stationary.\cite{PerdewLevy1985esDFT} To extend this result to arbitrary excited states, G\"{o}rling replaced the minimization of the Levy constrained search functional with a stationary principle and labeled the stationary states with a real number $\nu$.\cite{Gorling1999esDFTgenAC}, leading to a procedure where the functionals depend on the $\nu$ that specifies the system and state of interest. Indeed, the theoretical foundation of the $\Delta$SCF method as it is used in practice has not been established.\cite{VandaeleLuber2002deltaSCFreview,Herbert2023esDFTchapter}

Here, we show that the $\Delta$SCF method is exact and reveal the universal ground-and-excited-state energy functional. Instead of the density, our approach uses the potential as the basic variable and uses the fact that $\Delta$SCF theory only requires that excited-state electron densities, which are inherently interacting  \textit{v}-representable, also be noninteracting \textit{v}-representable. We use the fundamental excited-state potential functionals to build practical $\Delta$SCF methods based on the stationary-state wavefunctions and density matrices of the noninteracting system.
%(For notational simplicity, we will consider pure-state \textit{v}-representability, but the extension to ensemble-state \textit{v}-representability is straightforward.) %Not sure about how straightforward it is. One certainly has nasty things with the adiabatic connection, as an ensemble-N-representable state may not be representable at all after a small perturbation in the interaction parameter.)

Let $\Psi_{n,w}^\lambda$ and $\rho_{n,w}^\lambda(\mathbf{r})$ denote the wavefunction and density of the $n$th eigenstate of a $N$-electron Hamiltonian with a local potential, 
\begin{equation}\label{eq:H_w}
\hat{H}_{w}^\lambda=\hat{T}+\lambda V_{ee}+\sum_{i=1}^{N}w(\mathbf{r}_{i}).
\end{equation}
Extending the concept of ground-state \textit{v}-representability, the wavefunctions, density matrices, and densities of eigenstates from Hamiltonians of this form are said to be excited-state \textit{v}-representable. Extending the ground-state potential functional theory for a physical system with external potential $v(\mathbf{r})$,\cite{YangAyersWu2004potentialFT} define the excited-state potential functional:
\begin{align}\label{eq:E1}
E_{v}^{\lambda}[n,w] & = \braket{\Psi_{n,w}^\lambda|\hat{H}_{v}^\lambda|\Psi_{n,w}^\lambda} \\
&= \braket{ \Psi_{n,w}^\lambda|\hat{H}_{w}^\lambda|\Psi_{n,w}^\lambda} +\int d\mathbf{r}\left(v(\mathbf{r})-w (\mathbf{r})\right)\rho_{n,w}^\lambda(\mathbf{r}).\nonumber
\end{align}
\begin{theorem}
$E_{v}^{\lambda}[n,w]$ is stationary
with respect to variations in the trial potential, $w(\mathbf{r})$, if 
$w(\mathbf{r})$ and the physical potential, $v(\mathbf{r})$, differ by at most a constant. 
\end{theorem}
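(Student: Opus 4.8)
The plan is to compute the functional derivative $\delta E_{v}^{\lambda}[n,w]/\delta w(\mathbf{r})$ explicitly and show that it vanishes precisely when $w-v$ is spatially constant. The most transparent route exploits the eigenstate property of $\Psi_{n,w}^\lambda$ together with conservation of the particle number. First I would split the functional using the second line of Eq.~(\ref{eq:E1}), writing $E_{v}^{\lambda}[n,w]=\mathcal{E}_{n,w}^\lambda+\int d\mathbf{r}\,(v(\mathbf{r})-w(\mathbf{r}))\rho_{n,w}^\lambda(\mathbf{r})$, where $\mathcal{E}_{n,w}^\lambda=\braket{\Psi_{n,w}^\lambda|\hat{H}_{w}^\lambda|\Psi_{n,w}^\lambda}$ is the $n$th eigenvalue of $\hat{H}_{w}^\lambda$. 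The central tool is the Hellmann--Feynman theorem: because $\Psi_{n,w}^\lambda$ is a normalized eigenstate of $\hat{H}_{w}^\lambda$ and only the one-body term $\sum_i w(\mathbf{r}_i)$ carries the $w$-dependence, the contributions from the variation of the state cancel against the normalization, leaving
\begin{equation}
\frac{\delta \mathcal{E}_{n,w}^\lambda}{\delta w(\mathbf{r})}=\braket{\Psi_{n,w}^\lambda|\sum_{i}\delta(\mathbf{r}-\mathbf{r}_i)|\Psi_{n,w}^\lambda}=\rho_{n,w}^\lambda(\mathbf{r}).
\end{equation}

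Next I would differentiate the full functional. The explicit $-w\rho$ term produces a contribution $-\rho_{n,w}^\lambda(\mathbf{r})$ that cancels the Hellmann--Feynman term, so the only survivor is the response of the density to the potential,
\begin{equation}
\frac{\delta E_{v}^{\lambda}[n,w]}{\delta w(\mathbf{r})}=\int d\mathbf{r}'\,\bigl(v(\mathbf{r}')-w(\mathbf{r}')\bigr)\,\frac{\delta \rho_{n,w}^\lambda(\mathbf{r}')}{\delta w(\mathbf{r})}.
\end{equation}
This identity is the crux of the argument: the functional derivative is the interacting density--potential response smeared against $v-w$. Setting $w=v+c$ for a spatial constant $c$ and pulling $c$ out of the integral, the remaining factor is $\int d\mathbf{r}'\,\delta \rho_{n,w}^\lambda(\mathbf{r}')/\delta w(\mathbf{r})=\delta N/\delta w(\mathbf{r})=0$, because the particle number $N=\int d\mathbf{r}'\,\rho_{n,w}^\lambda(\mathbf{r}')$ is fixed independently of $w$. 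Hence the derivative vanishes and $E_{v}^{\lambda}[n,w]$ is stationary, which is the claim. Conversely, the formula makes clear that a non-constant $v-w$ generically yields a nonzero derivative, so the stationary condition selects exactly the physical potential up to a constant.

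The main obstacle is rigor rather than algebra: the Hellmann--Feynman step and indeed the very existence of $\delta\Psi_{n,w}^\lambda/\delta w$ presuppose that the targeted eigenstate varies differentiably with $w$. This is unproblematic for an isolated, nondegenerate level, but at degeneracies or level crossings the label $n$ need not track a smooth branch, and one must instead invoke degenerate perturbation theory, working within the relevant eigenspace and selecting the state that diagonalizes the perturbation $\sum_i \delta w(\mathbf{r}_i)$. As an alternative that sidesteps the response function entirely, I would note that one can vary the expectation value directly: whenever $w=v+c$, the state $\Psi_{n,w}^\lambda$ is an eigenstate of $\hat{H}_{v}^\lambda=\hat{H}_{w}^\lambda-Nc$, so that $\delta E_{v}^{\lambda}=2\,\mathrm{Re}\braket{\delta\Psi_{n,w}^\lambda|\hat{H}_{v}^\lambda|\Psi_{n,w}^\lambda}=\mathcal{E}_{n,v}^\lambda\,\delta\braket{\Psi_{n,w}^\lambda|\Psi_{n,w}^\lambda}=0$ by normalization, which reaches the same conclusion and disposes of the constant-shift case automatically.
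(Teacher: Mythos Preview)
Your proof is correct and follows essentially the same route as the paper: split $E_v^\lambda$ via the second line of Eq.~(\ref{eq:E1}), invoke Hellmann--Feynman for the eigenvalue term, and observe that only the density--response integral $\int d\mathbf{r}'\,(v-w)\,\delta\rho/\delta w$ survives, which vanishes when $v-w$ is constant. Your treatment is actually more explicit than the paper's on one point---you justify the constant-shift case via $\int d\mathbf{r}'\,\delta\rho/\delta w=\delta N/\delta w=0$, whereas the paper simply asserts the conclusion---and your discussion of the degenerate case and the alternative direct argument parallel material the paper relegates to the supplement.
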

\begin{proof} 
Consider a variation $\delta w(\mathbf{r})$.
For a nondegenerate eigenstate, $\delta\left\langle \Psi_{n,w}^\lambda\right|\hat{H}_{w}^\lambda\left|\Psi_{n,w}^\lambda \right\rangle =\int d\mathbf{r}\delta w(\mathbf{r})\rho_{n,w}^\lambda(\mathbf{r})$
,
\begin{align}
\delta E_{v}^{\lambda}[n,w]  =&\delta\left\langle \Psi_{n,w}^\lambda\right|\hat{H}_{w}^\lambda\left|\Psi_{n,w}^\lambda\right\rangle -\int d\mathbf{r}\delta w(\mathbf{r})\rho_{n,w}^\lambda(\mathbf{r})\notag
\\&+\iint d\mathbf{r}'d\mathbf{r}\left(v(\mathbf{r}')-w(\mathbf{r}')\right)\frac{\delta\rho_{n,w}^\lambda(\mathbf{r}')}{\delta w(\mathbf{r})}\delta w(\mathbf{r}) \notag
\\ =&\iint d\mathbf{r}'d\mathbf{r}\left(v(\mathbf{r})-
 w(\mathbf{r})\right)\frac{\delta\rho_{n,w}^\lambda(\mathbf{r})}{\delta w(\mathbf{r}')}\delta w(\mathbf{r}') \label{eq:th1}
\end{align}
$\delta E_{N,v}^\lambda[n,w(\mathbf{r})]=0$ when $v(\mathbf{r})-w(\mathbf{r})$ is a constant. 
\end{proof}
\begin{note}See the supplemental material for the theorem's converse and its extension to degenerate states. This result, and all subsequent analysis, is easily extended to spin-resolved (unrestricted) calculations: just replace the spatial variable $(\mathbf{r})$ with the spin variable $(\mathbf{x})=(\mathbf{r},\sigma)$.\end{note}

Theorem 1 allows us to shift variables from the many-electron wavefunction to the potential: the $n^{\text{th}}$ excited state can be determined by making the energy functional $E_{v}^{1}[n,w]$, stationary with respect to the potential $w$. The potential and $n$ then suffice to determine the excited-state wavefunction, $\Psi_{n,v}^\lambda$, (by solving the Schr{\"o}dinger equation). 

Conversely, because Hamiltonians, $\hat{H}_w^\lambda$, with different potentials have different eigenfunctions and different eigenfunctions of the same Hamiltonian are not only different, but orthogonal,
\begin{lemma}
There is a mapping from the space of excited-state v-representable wavefunctions to their corresponding potentials and excitation levels. 
\end{lemma}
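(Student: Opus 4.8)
The plan is to read the statement as an assertion that the assignment $\Psi \mapsto (n,w)$ is a genuine single-valued map, i.e. that each excited-state v-representable wavefunction determines its potential (up to an additive constant) and its excitation level unambiguously. Equivalently, I would show that the forward map $(n,w)\mapsto\Psi_{n,w}^{\lambda}$ is injective modulo the shift $w\to w+c$, since the well-definedness of the inverse is exactly injectivity of the forward map. The two ingredients quoted just before the statement are established separately: uniqueness of the potential, and uniqueness of the excitation label.

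First I would treat the potential. Fixing $\lambda$, suppose a single normalized $\Psi$ arises as an eigenstate of both $\hat{H}_{w}^{\lambda}$ and $\hat{H}_{w'}^{\lambda}$, with energies $E$ and $E'$. Subtracting the two Schr\"odinger equations gives
\begin{equation}
\left[\sum_{i=1}^{N}\left(w(\mathbf{r}_{i})-w'(\mathbf{r}_{i})\right)-(E-E')\right]\Psi=0 .
\end{equation}
At every configuration where $\Psi\neq0$ the bracket must vanish, so $\sum_{i}\left(w(\mathbf{r}_{i})-w'(\mathbf{r}_{i})\right)=E-E'$. Holding all but one electronic coordinate fixed then forces $w(\mathbf{r})-w'(\mathbf{r})$ to be independent of $\mathbf{r}$, hence a constant; this is precisely the allowed ambiguity. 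This is the standard Hohenberg--Kohn subtraction argument applied to an eigenstate rather than only the ground state.

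Next I would fix the potential (now pinned down up to a constant) and pin down the excitation label using the second quoted fact: eigenfunctions of the common Hamiltonian $\hat{H}_{w}^{\lambda}$ belonging to distinct energy levels are mutually orthogonal. If the same normalized $\Psi$ were assigned to two levels $n\neq n'$ with $E_{n}\neq E_{n'}$, orthogonality would give $\langle\Psi|\Psi\rangle=0$, contradicting normalization; hence $\Psi$ sits at a single energy level and the label is determined. Combining the two steps, the pair $(n,w\bmod\mathrm{const})$ attached to a given $\Psi$ is unique, so $\Psi\mapsto(n,w)$ is well defined, which is the claimed mapping.

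The hard part will be the inference ``the bracket vanishes wherever $\Psi\neq0$ implies $w-w'$ is constant.'' This is the measure-theoretic heart of the Hohenberg--Kohn argument: one must know that the eigenfunction does not vanish on a set large enough to let $w-w'$ vary. For ground states non-vanishing almost everywhere is standard; for excited states nodal surfaces appear, but they remain of measure zero, so the relation holds on a full-measure set, and continuity of the admissible potentials (or a unique-continuation theorem for the Schr\"odinger operator) extends it to all $\mathbf{r}$ and closes the gap. The other delicate point is degeneracy, where ``the $n$th eigenstate'' is not single-valued; as flagged in the Note, one would then map $\Psi$ to its eigenspace rather than to an individual vector, with the detailed treatment deferred to the supplemental material.
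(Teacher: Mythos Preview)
Your proposal is correct and tracks the paper's own argument essentially step for step: the sentence immediately preceding the Lemma already signals the two ingredients you use—(i) Hamiltonians with different potentials have different eigenfunctions (your Hohenberg--Kohn subtraction to pin down $w$ up to a constant), and (ii) distinct eigenfunctions of a fixed Hamiltonian are orthogonal (your contradiction for the excitation label)—and the full proof is relegated to the supplementary material along exactly these lines. Your explicit remarks on the measure-zero nodal set/unique-continuation issue and on degeneracy are appropriate caveats that the paper likewise defers.
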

This lemma is proved in the supplementary material.

We now address practical KS-DFT calculations. The \emph{ground-state Kohn-Sham assumption} is that for the ground state of any physical system, there exists a noninteracting system with the same ground-state density, $\rho_{0,w}^1(\mathbf{r}) = \rho_{0,w_s}^0(\mathbf{r})$. (I.e., every ground-state $v$-representable density is assumed to be ground-state noninteracting $v$-representable.) The Hohenberg-Kohn theorem then guarantees that the ground-state density uniquely determines the (non)interacting potential, $w_s(\mathbf{r})$, and the number of electrons, ergo the noninteracting Hamiltonian
\begin{equation}
\hat{H}_{s}=\hat{T}+\sum_{i}^{N}w_{s}(\mathbf{r}_{i})=\hat{H}_{w_s}^0,\label{eq:H_s}
\end{equation}
and its eigenfunctions, which can be chosen to be Slater determinants, $\Phi$. 

We first recognize that $\Delta$SCF calculations rely upon the \emph{excited-state Kohn-Sham assumption}, namely that for any bound state of any physical system, there exists a bound state of a noninteracting system with the same electron density, $\rho_{n,w}^1(\mathbf{r}) = \rho_{n_s,w_s}^0(\mathbf{r})$. (I.e., each excited-state $v$-representable density is also excited-state noninteracting $v$-representable.) 
For the same electron density, there may be multiple noninteracting systems with different excitation quantum numbers $n_{s}$ and local potentials $w_{s}(\mathbf{r})$. 
Starting from any of these noninteracting systems, the many-electron wavefunction $\Psi_{n,w}$ can be approached from the noninteracting $\Phi_{n_{s}w_{s}}$ through an adiabatic
connecting Hamiltonian, $\hat{H}_{w^\lambda}^\lambda$ (cf. Eq. (\ref{eq:H_w})), where $w^{\lambda}(\mathbf{r})$ is any parameterized path of local potentials that satisfies the boundary conditions: $w^{0}(\mathbf{r})=w_{s}(\mathbf{r})$ and $w^1(\mathbf{r})=w(\mathbf{r})$.\cite{LangrethPerdew1977adiabaticC,harrisAdiabaticconnectionApproachKohnSham1984a,yangGeneralizedAdiabaticConnection1998a} (For example, one may choose $w_\lambda(\mathbf{r}) = (1-\lambda)w_s(\mathbf{r}) + \lambda w(\mathbf{r})$.\cite{harrisSurfaceEnergyBounded1974a}) The adiabatic connection maps the excitation quantum number $n$ and the local potential $w(\mathbf{r})$, or equivalently, $\Psi$, of a bound state of an interacting system to the corresponding quantities of a computationally convenient noninteracting system, $n_{s}$ and $w_{s}(\mathbf{r})$, or equivalently, $\Phi_{n_{s}w_{s}}$.

Our excited state KS assumption is parallel to the KS assumption for
the ground states, with one critical difference: we do not
assume the one-to-one mapping between excited-state densities and local potentials. It is not even necessary to choose the noninteracting reference so that its excitation number, $n_s$, matches that of the physical system: to construct the adiabatic connection we only need assume that there exists \textit{some} eigenstate of \textit{some} noninteracting reference system with the same density as the targeted eigenstate of the physical system. 

We now use the excited-state KS assumption to construct the energy functionals and variational principles for $\Delta$SCF calculations. Using the excited state KS mapping from $\{n_s,w_s(\mathbf{r})\}$
to $\{n,w(\mathbf{r})\}$, we can reexpress the excited-state potential functional, Eq. (\ref{eq:E1}), in terms of the noninteracting reference system,
\begin{align}\label{eq:E0}
E_{v}[n_{s},w_{s}]  =E_{v}^{1}\left[n,w\right] =\Braket{\Psi_{n,w}^1|\hat{H}_{v}^1 |\Psi_{n,w}^1}. 
\end{align}
As a consequence of the Lemma, an excited-state noninteracting $v$-representable wavefunction, $\Phi$,
uniquely determines $n_{s}$ and $w_{s}$; we can thus define a functional of the KS wavefunction, 
\begin{align}\label{eq:PhiFT}
E_{v}\left[\Phi\right] & =E_{v}[n_{s},w_{s}].
\end{align}
Similarly, as shown in the supplementary material, an excited-state one-electron density matrix, $\gamma_s(\mathbf{r},\mathbf{r}')$, determines $n_{s}$ and $w_{s}$. We thus define a (noninteracting) density-matrix functional, 
\begin{equation}\label{eq:RhosFT}
E_{v}\left[\gamma_{s}(\mathbf{r},\mathbf{r}')\right]=E_{v}[n_{s},w_{s}].
\end{equation}
We call these three equivalent approaches the excited-state potential-functional
theory ($n$PFT), the $\Phi$-functional theory ($\Phi$FT), and the
$\gamma_{s}$-functional theory ($\gamma_{s}$FT). These functionals, together with the following variational principle, provide the foundation for the $\Delta$SCF method. 

\begin{theorem} $E_{v}[n_{s},w_{s}]$ is stationary
with respect to variations in the trial potential $w_{s}(\mathbf{r})$,
when $w(\mathbf{r})$, the map of $w_{s}(\mathbf{r})$
to the potential at full interaction strength, equals the
physical potential $v(\mathbf{r})$ (up to a constant).
\end{theorem}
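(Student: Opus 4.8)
The plan is to recognize that $E_{v}[n_{s},w_{s}]$ is, by its definition in Eq.~(\ref{eq:E0}), simply the composition of the already-analyzed functional $E_{v}^{1}[n,w]$ with the excited-state Kohn-Sham map that carries the noninteracting data $(n_{s},w_{s})$ to the interacting data $(n,w)$. Stationarity with respect to $w_{s}$ should then follow from Theorem~1 (taken at $\lambda=1$) by the functional chain rule, with no fresh variational calculation required.

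First I would make the composite structure explicit. The Kohn-Sham map $w_{s}\mapsto w$ is fixed by the density-matching condition $\rho_{n,w}^{1}=\rho_{n_{s},w_{s}}^{0}$ underlying the adiabatic connection, with the discrete index $n_{s}\mapsto n$ held fixed under a continuous, infinitesimal variation of the potential, so that $E_{v}[n_{s},w_{s}]=E_{v}^{1}[n,w]$ with $w=w(w_{s})$. Applying the chain rule to a variation $\delta w_{s}(\mathbf{r})$ gives
\[
\frac{\delta E_{v}[n_{s},w_{s}]}{\delta w_{s}(\mathbf{r})}
= \int d\mathbf{r}'\,
\frac{\delta E_{v}^{1}[n,w]}{\delta w(\mathbf{r}')}\,
\frac{\delta w(\mathbf{r}')}{\delta w_{s}(\mathbf{r})}.
\]
Next I would insert the content of Theorem~1. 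From Eq.~(\ref{eq:th1}) at $\lambda=1$, the inner factor is $\frac{\delta E_{v}^{1}[n,w]}{\delta w(\mathbf{r}')}=\int d\mathbf{r}\,(v(\mathbf{r})-w(\mathbf{r}))\,\frac{\delta\rho_{n,w}^{1}(\mathbf{r})}{\delta w(\mathbf{r}')}$, which vanishes for every $\mathbf{r}'$ exactly when $v-w$ is a constant $c$, since then it collapses to $c\,\frac{\delta N}{\delta w(\mathbf{r}')}=0$ by particle-number conservation. Hence, at the $w_{s}$ whose full-interaction image $w$ equals $v$ up to a constant, the inner factor is identically zero in $\mathbf{r}'$, the integrand above vanishes pointwise, and $\frac{\delta E_{v}[n_{s},w_{s}]}{\delta w_{s}(\mathbf{r})}=0$, which is the asserted stationarity.

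The main obstacle, and the only delicate point, is to justify the chain rule itself, i.e.\ the existence and differentiability of the map $w_{s}\mapsto w$ and the boundedness of its Jacobian kernel $\frac{\delta w(\mathbf{r}')}{\delta w_{s}(\mathbf{r})}$. Existence is supplied by the excited-state Kohn-Sham assumption together with the adiabatic-connection construction introduced before Eq.~(\ref{eq:E0}), which assigns a definite interacting $(n,w)$ to each noninteracting $(n_{s},w_{s})$; smoothness I would argue from the parametric dependence of a nondegenerate bound eigenstate of $\hat{H}_{w^{\lambda}}^{\lambda}$ on the potential along the connection. The reassuring feature is that we never need this Jacobian to be invertible, nonsingular, or even one-to-one: because the companion factor vanishes \emph{identically} in $\mathbf{r}'$ at the stationary point, any finite Jacobian annihilates it. Thus stationarity in the interacting potential $w$ transfers to stationarity in the noninteracting potential $w_{s}$ regardless of the possibly many-to-one structure of the Kohn-Sham map, which is precisely the robustness the $\Delta$SCF construction demands.
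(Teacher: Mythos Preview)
Your proof is correct and follows essentially the same route as the paper: both apply the functional chain rule through the Jacobian $\delta w(\mathbf{r}')/\delta w_{s}(\mathbf{r})$ and then use the vanishing of $\delta E_{v}^{1}/\delta w$ from Theorem~1 (Eq.~(\ref{eq:th1})) when $v-w$ is constant. The only difference is presentational---you invoke Theorem~1 explicitly while the paper re-expands that computation inline---and your remarks on the existence and smoothness of the Kohn-Sham map $w_{s}\mapsto w$ address a technical point the paper simply leaves implicit.
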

\begin{proof}   
Consider a variation $\delta w_{s}(\mathbf{r})$.
From Eqs. (\ref{eq:E1}) and (\ref{eq:E0}),
\begin{align}
\frac{\delta E_{v}}{\delta w_{s}(\mathbf{r})}  =&\int d\mathbf{r}'\biggl[\frac{\delta\left\langle \Psi_{n,w}\right|H_{w}\left|\Psi_{n,w}\right\rangle }{\delta w(\mathbf{r}')}-\rho_{n,w}(\mathbf{r})\nonumber\\
+&\int d\mathbf{r}''\left(v(\mathbf{r}'')-w(\mathbf{r}'')\right)\frac{\delta\rho_{n,w}(\mathbf{r}'')}{\delta w(\mathbf{r}')}\biggr]\frac{\delta w(\mathbf{r}')}{\delta w_{s}(\mathbf{r})}\nonumber\\
 =&\int d\mathbf{r}''\left(v(\mathbf{r}'')-w(\mathbf{r}'')\right)\frac{\delta\rho_{n,w}(\mathbf{r}'')}{\delta w_{s}(\mathbf{r})}
\end{align}
if $v(\mathbf{r})-w(\mathbf{r})$ is a constant, $\delta E_{v}[n_s,w_{s}(\mathbf{r})]=0.$
\end{proof}
\begin{note} See the supplemental material for the theorem's converse and its extension to degenerate states.
\end{note}

The variational principle for $E_{v}[n_{s},w_{s}]$ with respect
to $w_{s}$ implies corresponding variational principles for $E_{v}\left[\Phi\right]$ and $E_{v}\left[\gamma_{s}\right]$. E.g., from the definition, (\ref{eq:PhiFT}),
\begin{align*}
E_{v}\left[\Phi\right] =  \braket{\Phi|\hat{T}|\Phi}+J\left[\rho\right]+E_{xc}[\Phi] + \int d\mathbf{r}v(\mathbf{r})\rho(\mathbf{r})  
\end{align*}
where we use $\rho(\mathbf{r})=\rho_{n,w}(\mathbf{r})=\gamma_{s}(\mathbf{r},\mathbf{r})=\braket{\Phi|\hat{\rho}(\mathbf{r})|\Phi}$,
denote the classical Coulomb energy as $J\left[\rho\right]$, and define the exchange-correlation energy functional as
\begin{align}\label{eq:Exc_Phigamma}
E_{xc}[\Phi]=\braket{\Psi_{n,w}^1|\hat{T} + V_{ee}|\Psi_{n,w}^1} -\braket{\Phi|\hat{T}|\Phi}-J\left[\rho\right] 
\end{align}
Subject to the ground-state KS assumption, there exists a ground-state wavefunction for a system with full electron interaction, $\Psi_{0,w}^1$, and these expressions for the exchange-correlation energy are equivalent to the traditional KS definition. Moreover, as the ground-state density $\rho(\mathbf{r})$ uniquely determines
$\Phi$ and $\gamma_{s}$, we can use the density as the basic variable and define the exchange-correlation energy (and, if we wish, even the noninteracting kinetic energy) as implicit functionals of $\rho(\mathbf{r})$. 

Similarly, subject to the excited-state KS assumption, there exists an excited-state wavefunction for a system with full electron interaction, $\Psi_{n,w}^1$, and $E_{xc}$ can be expressed as a functional of the noninteracting wavefunction, $\Phi$, or the noninteracting density matrix, $\gamma_{s}$. This leads to Eqs. (\ref{eq:Exc_Phigamma}), which are applicable to
both ground and excited states. The basic variable is ($n, w_s$),  $\gamma_{s}$ 
or $\Phi$; the electron density $\rho(\mathbf{r})$
no longer suffices. Thus, while ground-state KS theory is a DFT, excited-state KS theory is a $n$PFT, $\Phi$FT, or $\rho_{s}$FT. Thus, as shown in Table I, excited-state KS theory subsumes ground-state KS-DFT.

We now derive the working equations for the stationary points. For convenience, we will work directly with the set of $N$ orbitals, $\left\{ \phi_{i}(\mathbf{r})\right\} $, that are occupied in the $n_s^{\text{th}}$ excited state of the noninteracting Hamiltonian (cf. Eq. (\ref{eq:H_s})), each of which is an eigenfunction of the one-electron Hamiltonian with the trial
potential $w_{s}(\mathbf{r})$:
\begin{equation}
\left(-\tfrac{1}{2}\nabla^{2}+w_{s}(\mathbf{r})\right)\phi_{i}(\mathbf{r})=\varepsilon_{i}\phi_{i}(\mathbf{r}).\label{eq:OEP_orb}
\end{equation}
We can then express Theorem 2's stationarity condition in terms of the occupied orbitals or, equivalently, $\Phi$  or  $\gamma_{s}$. E.g., 
\begin{align}\label{eq:OEPCondition}
0=&\frac{\delta E_{v}[\gamma_{s}]}{\delta w_{s}(\mathbf{r}')}  =\sum_{i}\int d\mathbf{r}\frac{\delta E_{v}[\gamma_{s}]}{\delta\phi_i^{*}(\mathbf{r})}\frac{\delta\phi_{i}^{*}(\mathbf{r})}{\delta w_{s}(\mathbf{r}')}+c.c.\nonumber \\
  =&\sum_{i}\int d\mathbf{r}\frac{\delta\phi_{i}^{*}(\mathbf{r})}{\delta w_{s}(\mathbf{r}')}\biggl[\left(-\frac{1}{2}\nabla^{2}+v(\mathbf{r})+v_{J}\right)\phi_{i}(\mathbf{r}) \nonumber \\ &\qquad \qquad \qquad \qquad +\frac{\delta E_{xc}[\gamma_{s}(\mathbf{r},\mathbf{r}')]}{\delta\phi_{i}^{*}(\mathbf{r}')}\biggr]+c.c.
\end{align}
Here $v_J(\mathbf{r})$ denotes the classical Coulomb potential. This stationary condition is the excited-state generalization of the optimized effective potential
(OEP) minimum energy condition for the ground states.\cite{sharpVariationalApproachUnipotential1953,talmanOptimizedEffectiveAtomic1976}

When $E_{xc}[\gamma_{s}]$ is \textit{approximated} by an explicit functional of the electron density, $E_{xc}[\rho(\mathbf{r})]$, as in the local density functional approximation or a generalized gradient approximation, we have
\begin{equation*}
    \frac{\delta E_{xc}[\gamma_{s}]}{\delta\phi_{i}^{*}(\mathbf{r})}=\frac{\delta E_{xc}[\rho]}{\delta\phi_{i}^{*}(\mathbf{r})}=\frac{\delta E_{xc}[\rho]}{\delta\rho(\mathbf{r})}\phi_{i}(\mathbf{r})=v_{xc}(\mathbf{r})\phi_{i}(\mathbf{r}),
\end{equation*}
where $v_{xc}(\mathbf{r})=\frac{\delta E_{xc}[\rho]}{\delta\rho(\mathbf{r})}$ is the (local) exchange-correlation potential. The OEP stationary condition (\ref{eq:OEPCondition}) is then satisfied when the trial potential is equal to the Kohn-Sham potential constructed from the density composed by the orbitals occupied in the $n_s^{\text{th}}$ excited state,  
\begin{equation}
w_s(\mathbf{r}) = v_{\text{eff}}(\mathbf{r})=v(\mathbf{r})+v_{J}(\mathbf{r})+v_{xc}(\mathbf{r}).\label{eq:v_eff}
\end{equation}
The $\Delta$SCF approach is to reach the stationary condition by solving, self-consistently, the excited-state KS equations
obtained by inserting Eq. (\ref{eq:v_eff}) into Eq. (\ref{eq:OEP_orb}).

When $E_{xc}[\gamma_{s}]$ is given
as an explicit functional of $\gamma_{s}(\mathbf{r},\mathbf{r}')$,
as in Hartree-Fock and hybrid DFT, 
\begin{equation} \label{eq:genKSderivative}  
\frac{\delta E_{xc}[\Phi]}{\delta\phi_{i}^{*}(\mathbf{r})}=\frac{\delta E_{xc}[\gamma_{s}]}{\delta\phi_{i}^{*}(\mathbf{r})}=\int d\mathbf{r}'v_{xc}(\mathbf{r},\mathbf{r}')\phi_{i}(\mathbf{r}')  
\end{equation}
 where the nonlocal exchange-correlation operator is defined as $v_{xc}(\mathbf{r},\mathbf{r}')=\frac{\delta E_{xc}[\gamma_{s}]}{\delta\gamma_{s}(\mathbf{r}',\mathbf{r})}$. The noninteracting potential can be determined by substituting (\ref{eq:genKSderivative}) into the OEP equation (\ref{eq:OEPCondition}). Alternatively, a different stationary point can be obtained using a generalized Kohn-Sham (GKS) approach where one allows the noninteracting reference system described by the one-electron equations (\ref{eq:OEP_orb}) to have a nonlocal effective potential,\cite{jinGeneralizedOptimizedEffective2017} 
\begin{equation}
w_s(\mathbf{r},\mathbf{r}') = v_{\text{eff}}(\mathbf{r},\mathbf{r}')=\left(v(\mathbf{r})+v_{J}(\mathbf{r})\right)\delta(\mathbf{r}-\mathbf{r}')+v_{xc}(\mathbf{r},\mathbf{r}').\label{eq:v_eff_nonlocal}
\end{equation}
This GKS approach was already developed in the original work of Kohn and Sham and was called Hartree-Fock-Kohn-Sham
method\cite{kohnSelfconsistentEquationsIncluding1965,parrDensityFunctionalTheoryAtoms1989}. GKS calculations are much easier computationally than OEP calculations, but the difference in total energy and electron density, for ground states, is usually small. However, there is a major difference in the values of the (virtual) orbital energies and their interpretation.\cite{cohenFractionalChargePerspective2008b,garrickExactGeneralizedKohnSham2020}  
% PWA: This seems a bit beside the point, especially as we don't have any results (yet)
% PWA for what the orbital energies mean in this case. For the local potential case,
% PWA I can prove that the HOMO should be the IP (to the lowest-energy-accessible 
% PWA N-1 electrons state and that the LUMO, on the electron-abundant side, is the
% PWA same for the lowest-energy-accessible N+1-electron state if that state has higher
% PWA energy than the starting state (asymptotic argument). But I have no idea for 
% PWA GKS. Probably one can choose them to be exact, but there is so much freedom.....
%For ground
% state GKS calculations, its eigenvalues of the frontier HOMO/LUMO
% orbitals are the chemical potentials for electron removal and electron
% addition and are thus the minus of predicted electron ionization energy
% and electron affinity from the approximate functional used (CITE).
% The corresponding OEP orbital energies for general$E_{xc}[\rho_{s}^{\sigma}(\mathbf{r},\mathbf{r}')]$
% do not have the same interpretation.

Theorem 2 establishes the foundation for $\Delta$SCF calculations
in KS, OEP and GKS calculations. As expressed in the Figure, while $\gamma_s$ or $\Phi$ are used as explicit variables in ground-state KS theory, they are inferred from the noninteracting Hamiltonian and its potential $w_s(\mathbf{r})$, which are determined uniquely (up to a trivial constant) by the ground-state density. Thus ground-state KS theory is still a \textit{bona fide} DFT. Because excited-state densities do not uniquely determine their associated potentials, excited-state KS theory cannot be based only on the density, but it can be based on the noninteracting potential. The excited-state electron density, as the functional derivative of the eigenenergy with respect to the potential, still plays a key role.\cite{jinIntroductoryLectureWhen2020} Specifically, at stationary points of our new functionals the density of the noninteracting system is equal to the density of corresponding physical system, for both ground and excited states.

\begin{table}
\label{tab:functionals}

\caption{Ground- and excited-state energy functionals and their variables. The new functionals are in the last row.}

\begin{tabular}{|c|c|c|c|c|c|}
\hline
 &  & \multicolumn{4}{c|}{Theoretical Formulations}\tabularnewline
\hline
Theory & States & DFT & $n$PFT & $\Phi$FT & $\gamma_{s}$FT\tabularnewline
\hline
\hline
HK & ground & $E_{v}[\rho(\mathbf{r})]$ & $E_{v}^{1}[0,w]$ &  & \tabularnewline
\hline
KS & ground & $\text{\ensuremath{E_{v}}[\ensuremath{\rho}(\ensuremath{\mathbf{r}})] }$ & $E_{v}[0,w_{s}]$ & $E_{v}[\Phi]$ & $E_{v}[\gamma_{s}(\mathbf{r},\mathbf{r}')]$\tabularnewline
\hline
 & all & N/A & $E_{v}[n_s,w_{s}]$ & $E_{v}[\Phi]$ & $E_{v}[\gamma_{s}(\mathbf{r},\mathbf{r}')]$\tabularnewline
\hline
\end{tabular}

\end{table}

\begin{figure}
\begin{centering}
\includegraphics[width=3.4in]{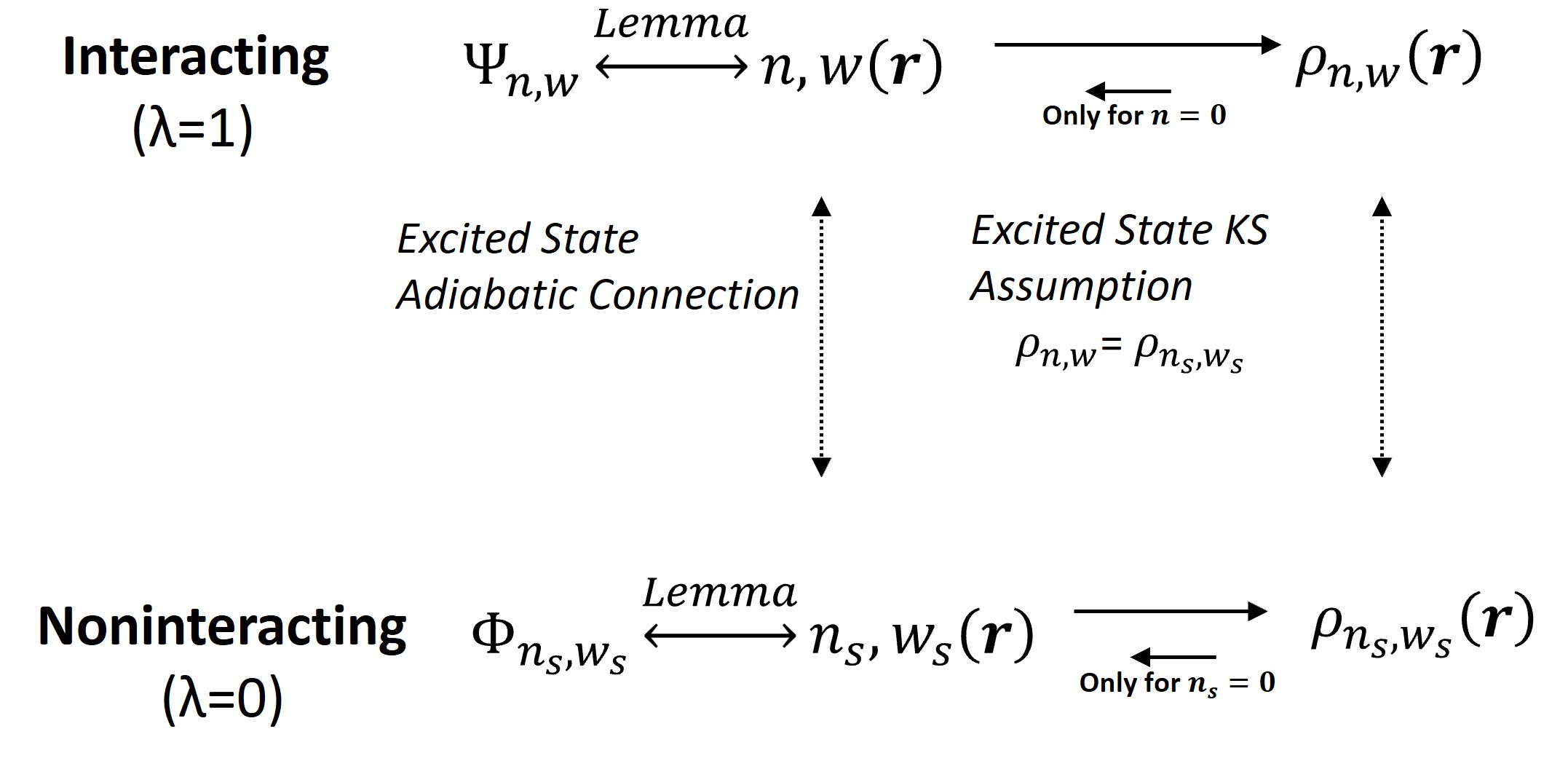}\caption{Variables and Their Relationship for Describing Excited States. }
\par\end{centering}
\end{figure}

To construct the exchange-correlation energy functional, one uses the adiabatic connection. It is simplest and most conventional
to choose a constant-density adiabatic connection, where $n^\lambda$ and $w^\lambda(\mathbf{r})$ are chosen so that the specified excited state density, $\braket{\Psi_{n^\lambda,w^\lambda}^\lambda|\hat{\rho}(\mathbf{r})|\Psi_{n^\lambda,w^\lambda}^\lambda}$, 
of the adiabatic connection Hamiltonian (cf. Eq. (\ref{eq:H_w})) remains constant.\cite{Gorling1999esDFTgenAC} By tracing this pathway, $\left(n_{s},w_{s}\right)=\left(n^0,w^0\right)$ (equivalently $\Phi$ and $\gamma_s$) determine the properties of the relevant excited state of the interacting system, $\left(n^1,w^1\right)$, and the exchange-correlation functional can be explicitly written,
\begin{align}
E_{xc}[\gamma_{s}] & =\int_{0}^{1}d\lambda \braket{\Psi_{n^\lambda,w^\lambda}^\lambda|\hat{V}_{ee}|\Psi_{n^\lambda,w^\lambda}^\lambda} - J[\rho]
\label{eq:E_xc_AC}
\end{align}
\textit{As a functional of $\gamma_s$ or $\Phi$, Eq. (\ref{eq:E_xc_AC}) is the same for ground and
excited states.} This justifies the using approximations to $E_{xc}$ that were originally designed for ground states for excited states in the $\Delta$SCF method. We are certainly not the first to note the utility of $\Phi$- or $\gamma_s$- functionals for treating excited states, as  $\Phi$ or $\gamma_s$ are always the operational variables in $\Delta$SCF calculations.\cite{Slater1972AdvQuantumChem,SlaterWood1970SlaterTM,
ZieglerRaukBaerends1977deltaSCF,JonesGunnarsson1989dftreview,Gorling1996esDFT,Gorling1999esDFTgenAC, Gorling2000esDFTPRL,KowalczykanVoorhis2011deltaSCF,GouldKronik2021ensembleDFTgenKS}  Our work defines the corresponding energy functionals and formulates their variational principles, based on the potential-functional formulation.

For ground states, the original Hohenberg-Kohn-Sham theorems were restricted to $v$-representable densities,\cite{HohenbergKohn1964,kohnSelfconsistentEquationsIncluding1965} motivating the development of functionals defined on the broader set of $N$-representable densities.\cite{Levy1979constrainedsearch,valoneOnetoOneMappingOneParticle1980,Lieb1983dftIJQC,Lieb1985,ayersAxiomaticFormulationsHohenbergKohn2006} However, because there is no excited-state Hohenberg-Kohn theorem, $N$-representable DFT frameworks cannot be directly extended to excited states.\cite{Lieb1985,Gorling1996esDFT, Gorling1999esDFTgenAC, Gorling2000esDFTPRL,LevyNagy1999esDFTKoopmans, LevyNagy1999esDFTPRL, NagyLevy2001esDFTconstrainedsearch, RNAyersLevy2009excited, NagyLevyAyers2009esDFTchapter} Our ground-state potential functional theory preserves the $v$-representable framework of the original Hohenberg-Kohn-Sham DFT\cite{YangAyersWu2004potentialFT} and has now been extended to excited states, establishing the rigor of the $\Delta$SCF method.

% While, for clarity, we have not considered spin in this paper,  Similarly, subject to certain assumptions, the approach here can be extended to excited-state electron densities that are only ensemble-$v$-representable.  

Summarizing, we present three universal ground-and-excited-state functionals
$E_{v}[n,w_{s}]=E_{v}[\Phi]=E_{v}[\gamma_s]$. The first functional generalizes the ground-state PFT formulation.\cite{YangAyersWu2004potentialFT} Similarly, $E_{v}[\Phi]$ and $E_{v}[\gamma_s]$ extend the ground-state (generalized) Kohn-Sham approach. The minimum of these functionals yields the ground state energy and its density. Their other stationary points yield excited-state energies and electron densities, as shown in Theorem 2. The realization that functionals can give exact results for ground states \textit{and} excited states should be used when designing and testing new exchange-correlation functionals.

\begin{acknowledgments}
WY acknowledges support from the National Science Foundation (CHE-2154831) and the National Institute of Health (R01-GM061870). PWA acknowledges support from NSERC, the Canada Research Chairs, and the Digital Research Alliance of Canada. 
\end{acknowledgments}

% The \nocite command causes all entries in a bibliography to be printed out
% whether or not they are actually referenced in the text. This is appropriate
% for the sample file to show the different styles of references, but authors
% most likely will not want to use it.
%\nocite{*}

%\bibliography{apssamp}% Produces the bibliography via BibTeX.
%apsrev4-2.bst 2019-01-14 (MD) hand-edited version of apsrev4-1.bst
%Control: key (0)
%Control: author (8) initials jnrlst
%Control: editor formatted (1) identically to author
%Control: production of article title (0) allowed
%Control: page (0) single
%Control: year (1) truncated
%Control: production of eprint (0) enabled
\providecommand{\noopsort}[1]{}\providecommand{\singleletter}[1]{#1}%

\end{document}